\documentclass[%
reprint,prl,twocolumn,superscriptaddress,amsmath,amssymb,
 aps]{revtex4-2}

\usepackage{graphicx}
\usepackage{xcolor}
\usepackage{amsthm,bbm}
\usepackage{bm,dsfont}% bold math
\usepackage{braket}% bold math
\usepackage{array}
\usepackage{amssymb}
\usepackage{comment}
\usepackage[colorlinks=true, citecolor=blue, urlcolor=blue]{hyperref}
\usepackage{enumitem}

\makeatletter
\newtheorem*{rep@theorem}{\rep@title}
\newcommand{\newreptheorem}[2]{%
\newenvironment{rep#1}[1]{%
 \def\rep@title{#2 \ref{##1}}%
 \begin{rep@theorem}}%
 {\end{rep@theorem}}}
\makeatother
\usepackage{mathrsfs}

\newreptheorem{theorem}{Theorem}

\newtheorem{definition}{Definition}

\newtheorem*{result}{Result}

\newtheorem{observation}{Observation}
\newreptheorem{lemma}{Lemma}

\newcommand{\be}{\begin{equation}} 
\newcommand{\ee}{\end{equation}}
\newcommand{\beq}{\begin{eqnarray}}
\newcommand{\eeq}{\end{eqnarray}}
\newcommand{\rank}{\mathrm{rank}}
\newcommand{\id}{\mathds{1}}
\newcommand{\tr}{{\rm tr}}

\newcommand{\B}{\mathcal{B}}
\newcolumntype{C}{>{$}c<{$}}

\begin{document}

%%%%%%%%%%%%%%%%%%%%%%%%%%%%%%%%%%%%%%%%%%%%%%%%%%%%%%%%%%%%%%%%%%%

\title{Self-testing ideal quantum measurements}

%%%%%%%%%%%%%%%%%%%%%%%%%%%%%%%%%%%%%%%%%%%%%%%%%%%%%%%%%%%%%%%%%%%

\author{Debashis Saha}
\email{dsaha@iitbbs.ac.in}
\affiliation{Department of Physics, School of Basic Sciences, Indian Institute of Technology Bhubaneswar, Odisha 752051, India}
\author{Zhen-Peng Xu}
\email{zhen-peng.xu@ahu.edu.cn}
\affiliation{School of Physics, Anhui University, 230601 Hefei, People’s Republic of China}

\author{Ad\'an~Cabello}
\email{adan@us.es}
\affiliation{Departamento de F\'{\i}sica Aplicada II, Universidad de Sevilla, E-41012 Sevilla,
Spain}
\affiliation{Instituto Carlos~I de F\'{\i}sica Te\'orica y Computacional, Universidad de
Sevilla, E-41012 Sevilla, Spain}

%%%%%%%%%%%%%%%%%%%%%%%%%%%%%%%%%%%%%%%%%%%%%%%%%%%%%%%%%%%%%%%%%%%

\begin{abstract}
A fundamental prediction of quantum mechanics is that observables can be measured through ideal measurements, that is, measurements that give the same outcome when repeated and do not disturb any observable that is jointly measurable. In terms of quantum instruments, such an ideal measurement is described by projective effects together with the associated L\"uders state-update rule. Here, we show that quantum mechanics allows for self‑testing \textit{any} ideal measurement. Using solely the output statistics of bipartite Bell tests in which a non-demolition measurement is sometimes added, it is possible to certify in a device-independent manner whether the added measurement is ideal. %The proof relies on quantum perfect strategies.
\end{abstract}

%%%%%%%%%%%%%%%%%%%%%%%%%%%%%%%%%%%%%%%%%%%%%%%%%%%%%%%%%%%%%%%%%%%

\maketitle

%%%%%%%%%%%%%%%%%%%%%%%%%%%%%%%%%%%%%%%%%%%%%%%%%%%%%%%%%%%%%%%%%%%

{\em Introduction.---} One of the most powerful concepts in quantum information theory is self-testing \cite{Yao_self,vsupic2019self}: the ability to certify whether specific devices perform particular quantum operations (such as state preparations and measurements of observables), without requiring any knowledge of the internal workings of the devices. Instead, certification relies solely on the outcome statistics of experiments conducted using these devices.

One of the most fundamental predictions of quantum mechanics is the existence of minimally disturbing measurements, also called ``ideal'' or ``sharp'' \cite{Pokorny:PRL20}. That is, 
measurements that yield identical outcomes upon repetition and leave unchanged the outcome statistics of any observable that is jointly measurable (compatible) with them \cite{Chiribella:2016IC,Kleinmann_2014}. Ideal measurements have some fundamental properties: (i) If the input state is pure then, once you read the outcome, the output state is pure \cite{Luders:1951APL}. 
(ii) Even if the input state is maximally mixed, if the observable is maximal, the output state is pure.
(iii)~Enabling outcome-dependent delayed refinements: one can measure any non-maximal observable and {\em later} decide what compatible observable one wants to measure next, depending on, e.g., the outcome obtained (rather than having to decide beforehand to measure a maximal observable).
(iv)~Enabling contextual correlations. Ideal measurements allow one to separate the correlations that are inherently permitted by the theory from the disturbance that may be introduced by a clumsy measurement of the observable.
This way, ideal measurements enable sequential experiments which can unveil fundamental correlations which cannot occur in Bell tests and which are crucial to understand quantum theory \cite{Cabello:2013PRL,Cabello:2019PRA}.
From a broader perspective, it has been proven that any general Bayesian theory for experiments on individual physical systems {\em requires} ideal measurements \cite{Chiribella:2020PRR}.

As pointed out by Kip Thorne in \cite{Braginskii:1992Book}, until recently, ``the only feature of the quantum theory of measurement which entered into the experiments was the probability interpretation of squared amplitudes. However, 
technology finally began to catch up with the (\ldots) concepts [of the quantum theory of measurement, allowing for] repeated measurements on single quantum systems'' \cite{Braginskii:1992Book}. So while even modern books say that
``repeatable tests (\ldots) exist mostly in the imagination of theorists'' \cite{Peres:1993},
we can nowadays perform 53 million consecutive tests on the same trapped-ion qutrit with a repeatability above 99.6\% \cite{Leupold:2018PRL}.

%In Chapter 4 of his {\em Mathematical Foundations of Quantum Mechanics} \cite{vonNeumann:1932SPR}, von Neumann wrote that when an observable with degenerate spectrum is measured on a pure state and the outcome corresponding to the degenerate eigenvalue is obtained, the post-measurement state should be mixed. However, L\"uders observed that, to be {\em mathematically consistent}, quantum mechanics requires a different rule for quantum state transformation \cite{Luders:1951APL}: one that preserves purity. 

%Interestingly, from a theory-independent operational perspective, the rule proposed by L\"uders \cite{Luders:1951APL} for state-update rule implies the existence of measurements that are minimally disturbing (or ideal or sharp). This discovery is crucial for understanding what class of theory quantum mechanics is, as 
%That is, regardless of whether their actual implementation is technologically challenging and even if they requires infinite resources \cite{Guryanova:2020Q}, minimally disturbing measurements (and pure states) are central to science. 

Here, we address the question of whether quantum ideal measurements can be self-tested. That is, whether quantum mechanics allows one to certify, given only a black-box device, that the device implements an ideal measurement.

Self-testing quantum ideal measurements is important from a practical perspective: many quantum technologies involve long sequences of measurements where each step produces classical information while also preparing a post-measurement quantum state which is used in subsequent stages. Examples include extensions of measurement-based quantum computation \cite{Raussendorf:PRL2001} based on sequential ideal measurements~\cite{BSS16}, quantum processing tasks using sequential measurements~\cite{Leupold:2018PRL,kleinmann2011memory}, and quantum communication networks where intermediate parties receive quantum systems, perform measurements to obtain partial information, and forward the resulting quantum states to other nodes~\cite{Brierley_prl2015,Savov:12,Uzi:2025}. In these scenarios, the correct functioning of the device implementing an ideal measurement is crucial. 

In quantum theory, an instrument is the mathematical object that describes both the classical outcome produced by a measurement device and the corresponding post-measurement quantum state~\cite{Davies:JMA70,Ozawa:JMP1984,Dressel:PRA2013}. Experimental techniques for characterizing quantum instruments already exist, typically based on variants of quantum process tomography~\cite{Pokorny:PRL20,Stricker:PRX2022}. The methods we introduce here differ fundamentally; here, the certification relies only on device-independent statistics, namely the observed conditional probabilities of outputs given inputs. Consequently, our protocol can be applied even when the internal functioning of the measurement device is unknown or intentionally hidden, for example, when the device is provided by an untrusted manufacturer. 
While device-independent certification of quantum channels~\cite{Sekatski_prl_2018} and non-projective measurements~\cite{Wagner2020} has recently been explored, our main objective here is the self-testing of ideal measurements and their induced state transformations.

From a conceptual standpoint, there are two principal approaches to reconstructing quantum theory from operational principles. The first is the general probabilistic theories framework \cite{Janotta_2014}, which puts the emphasis on states (classes of equivalence of preparations with respect to measurements), observables (classes of equivalence of measurements with respect to preparations), and transformations. The second is the device-independent approach \cite{Scarani:2013APS}, which focuses on conditional probabilities relating measurement inputs to outputs, treating the devices as ``black boxes.'' Traditionally, these two programs have developed independently. However, it has been pointed out that bridging them offers significant conceptual and technical benefits \cite{Chiribella:2016IC}. The methods we propose here serve as a concrete example of such benefits: they operate within the device-independent paradigm by relying solely on input-output probability distributions, yet they go beyond standard DI results by certifying not only states and observables but also state transformations.

%%%%%%%%%%%%%%%%%%%%%%%%%%%%%%%%%%%%%%%%%%%%%%%%%%%%%%%%%%%%%%%%%%%

\textit{The basic idea.---}
The basic idea of our protocols is illustrated in Fig.~\ref{Fig1}. Two distant parties, Alice and Bob, share uncharacterized measurement devices with classical inputs $x$ and $y$ and outputs $a$ and $b$, respectively. In each round of the experiment, they randomly choose inputs and record the corresponding outputs. A third party, Carol, who is spacelike separated from Alice, may choose to insert an additional measurement device before Bob's device. When Carol's device is used, it produces a classical outcome $c$ and prepares an unknown post-measurement quantum state that is subsequently measured by Bob.
In the rounds where Carol does not insert her device, the experiment reduces to a standard Bell test, allowing Alice and Bob to obtain correlations $p(a,b|x,y)$. In the rounds where Carol performs a measurement and obtains outcome $c$, the correlations become $p(a,b,c|x,y)$. 

The correlations $\{p(a,b|x,y)\}$ are first used to self-test the shared state between Alice and Bob, and Alice's measurements (up to standard local degrees of freedom) through the maximal violation of appropriate Bell inequalities \cite{Yao_self}. Because Carol’s choice of whether to perform a measurement is made at a spacelike-separated location, it cannot influence Alice's measurement. Consequently, Alice's observables remain the same in both types of rounds.
Conditioning on Carol's outcome $c$, the correlations $\{p(a,b|x,y,c)\}$ can again exhibit maximal Bell inequality violations between Alice and Bob. Together with the observed statistics $\{p(a,c|x)\}$ and the previously deduced self-tested state and Alice's measurements, allow one to self-test the post-measurement state given $c$, up to the device-independent equivalences. 
In this way, the protocol self-tests an ideal quantum measurement implemented in Carol's instrument. 

%%%%%%%%%%%%%%%%%%%%%%%%%%%%%%%%%%%%%%%%%%%%%%%%%%%%%%%%%%%%%%%%%%%
% Fig. 1
%%%%%%%%%%%%%%%%%%%%%%%%%%%%%%%%%%%%%%%%%%%%%%%%%%%%%%%%%%%%%%%%%%%

\begin{figure}[!t]
    \centering  \includegraphics[scale=0.51]{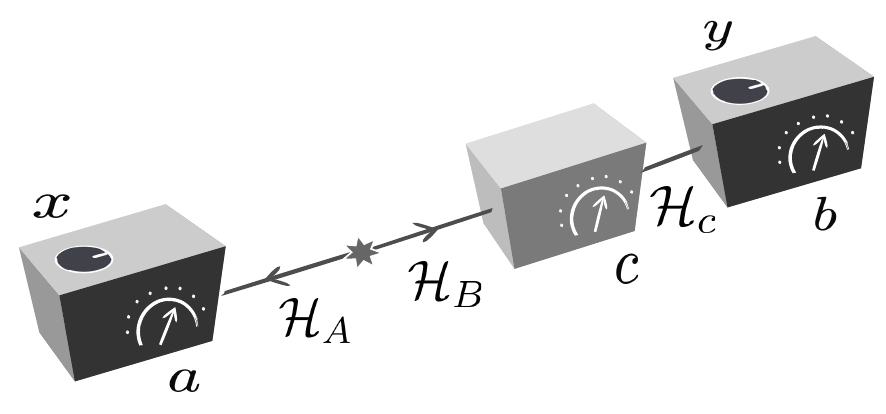}
    \caption{Extended Bell experiment involving Alice, Bob, and Carol. Alice and Bob perform measurements with inputs $x,y$ and outputs $a,b$, respectively. Carol randomly decides whether to place her black-box instrument, producing outcome $c$ and an unknown post-measurement system, before Bob's device. The experiment therefore produces correlations $\{p(a,b|x,y)\}$ and $\{p(a,b,c|x,y)\}$, from which Carol's instrument can be self-tested to be sharp in a fully device-independent manner.}
    \label{Fig1}
\end{figure}

%%%%%%%%%%%%%%%%%%%%%%%%%%%%%%%%%%%%%%%%%%%%%%%%%%%%%%%%%%%%%%%%%%%

We now proceed to mathematically formulate the self-testing of ideal measurements precisely. 

%%%%%%%%%%%%%%%%%%%%%%%%%%%%%%%%%%%%%%%%%%%%%%%%%%%%%%%%%%%%%%%%%%%

\textit{Self-testing ideal measurements.---}
In the Bell experiment described above, let the unknown shared state $\rho$ belong to $\mathcal{L}\left(\mathcal{H}_A\otimes \mathcal{H}_B\right)$. Alice's and Bob's devices implement unknown measurements $\{A_{a|x}\}_{a,x}$ and $\{B_{b|y}\}_{b,y}$ acting on $\mathcal{H}_A$ and $\mathcal{H}_B$, respectively. 

The unknown measurement instrument implemented by Carol is denoted by
$\{\mathcal{M}_c\}_c$. For every outcome $c$,
$\mathcal{M}_c:\mathcal{L}(\mathcal{H}_B)
    \longrightarrow \mathcal{L}(\mathcal{H}_c)$
is a completely positive trace-nonincreasing map, and
$\sum_c\mathcal{M}_c$ is trace preserving. 
Here,
$\mathcal{H}_c$ denotes the arbitrary post-measurement quantum system
delivered to Bob when Carol obtains outcome $c$. No particular
relation between the input Hilbert space $\mathcal{H}_B$ and the
output Hilbert spaces $\mathcal{H}_c$ is assumed. In particular,
Carol's instrument may involve arbitrary internal ancillary systems
or environments that are not accessible to Bob.

Consider now the reference ideal measurement that we aim to self-test, described by a collection of orthogonal projectors
$\{\Pi_c\}_c$ acting on $\mathbb{C}^d$, with
$\sum_c\Pi_c=\id$. The corresponding ideal measurement instrument is
the L\"uders instrument
$\mathcal{P}_c(X)=\Pi_c X\Pi_c .$
The Choi--Jamio{\l}kowski isomorphism provides a natural way of
formulating the certification of such an instrument. Let
$\ket{\phi_d^+} =\frac{1}{\sqrt d}\sum_{i=0}^{d-1}\ket{ii}.$
For a quantum operation $\mathcal{N}$ acting on a $d$-dimensional
input system, its Choi operator is $J(\mathcal{N}) :=
(\id\otimes\mathcal{N}) \bigl(\ket{\phi_d^+}\!\bra{\phi_d^+}\bigr).$
The Choi--Jamio{\l}kowski isomorphism implies that $\mathcal{N}$ is
completely characterized by $J(\mathcal{N})$. In particular, the
Choi operator corresponding to the ideal operation
$\mathcal{P}_c$ is
$J(\mathcal{P}_c) =(\id\otimes\Pi_c)\ket{\phi_d^+}\!\bra{\phi_d^+}(\id\otimes\Pi_c).$
Notice that this operator is in general unnormalized, with
$\tr[J(\mathcal{P}_c)] =\frac{\rank(\Pi_c)}{d}.$

In a device-independent setting, the physical input and output Hilbert spaces of the unknown instrument element $\mathcal{M}_c$ are not characterized and need not coincide with those of the reference instrument $\mathcal{P}_c$. Comparing the two operations therefore requires suitable transformations that identify the relevant physical input and output subsystems with their ideal counterparts.
At the level of the Choi representation, this can be viewed as a pre-processing operation $V$ that maps the ideal input subsystem,
together with suitable auxiliary degrees of freedom, to the physical input of $\mathcal{M}_c$, followed by a post-processing operation
$U_c$ that extracts the ideal post-measurement subsystem from the physical output of $\mathcal{M}_c$.
More precisely, the equivalence
between the physical and ideal instrument elements can be expressed as
\begin{equation}
\begin{split}
    &(\id_A\otimes\mathcal{U}_c)
    (\id_A\otimes\mathcal{M}_c)
        (\id_A\otimes\mathcal{V})
            \Big[
                \ket{\phi_d^+}\!\bra{\phi_d^+}
                \otimes\rho_{\mathrm{aux}}
            \Big]
    \\
    &\quad =
    (\id\otimes\Pi_c)
    \ket{\phi_d^+}\!\bra{\phi_d^+}
    (\id\otimes\Pi_c)
    \otimes\rho_{\mathrm{aux},c}.
    \label{CJst}
\end{split}
\end{equation}
where $\mathcal{V}(X)=VXV^\dagger,
\mathcal{U}_c(X)=U_cXU_c^\dagger,$
and $\rho_{\mathrm{aux}}$ and $\rho_{\mathrm{aux},c}$ denote the
input and output auxiliary states, respectively.
This pre- and post-processing
picture is illustrated in Fig.~\ref{Fig2}. 

%%%%%%%%%%%%%%%%%%%%%%%%%%%%%%%%%%%%%%%%%%%%%%%%%%%%%%%%%%%%%%%%%%%
% Fig. 2
%%%%%%%%%%%%%%%%%%%%%%%%%%%%%%%%%%%%%%%%%%%%%%%%%%%%%%%%%%%%%%%%%%%

\begin{figure}[!t]
    \centering
    \includegraphics[scale=0.49]{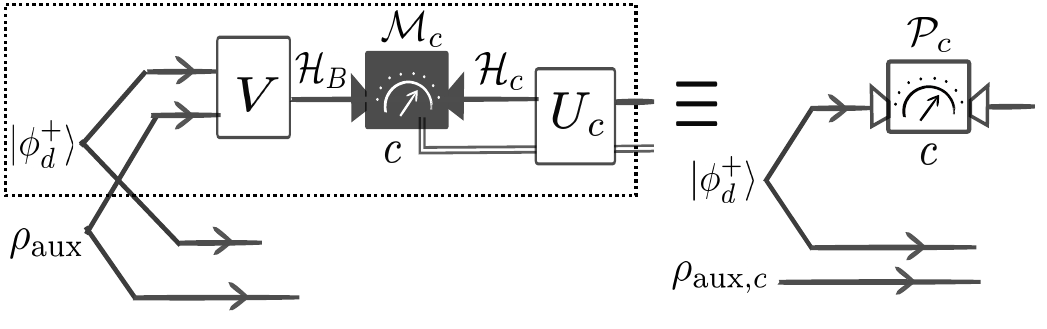}
    \caption{Schematic representation of instrument self-testing via the Choi--Jamio{\l}kowski isomorphism. The pre-processing transformation $V$ identifies the physical input of the unknown operation $\mathcal{M}_c$, while $U_c$ extracts the ideal subsystem from its post-measurement output. In the device-independent realization, $V$ is fixed by the initial state self-test as $V=U_B^\dagger$.}
    \label{Fig2}
\end{figure}

%%%%%%%%%%%%%%%%%%%%%%%%%%%%%%%%%%%%%%%%%%%%%%%%%%%%%%%%%%%%%%%%%%%

The relation in Eq.~\eqref{CJst}, however, is not yet device independent, since it assumes access to the maximally entangled input
$\ket{\phi_d^+}$.
We therefore first use the correlations $\{p(a,b|x,y)\}$, obtained
in the absence of Carol's instrument, to self-test the maximally entangled state. More precisely, there exist local extraction isometries $U_A$ and $U_B$ such that Eq.~\eqref{stAB} holds. Consequently, the pre-processing transformation appearing in
Eq.~\eqref{CJst} can be identified as $V=U_B^\dagger$, while the
post-processing $U_c$ extracts the ideal subsystem from Carol's
post-measurement output.

In particular, when Carol applies her instrument and obtains outcome $c$, the corresponding unnormalized post-measurement state shared by Alice and Bob is $\rho_{A B_c} := (\id_A\otimes\mathcal{M}_c) \bigl(\rho\bigr),$ and $p(c)=\tr[\rho_{A B_c}]$ is the probability of outcome $c$. For each
outcome $c$, an additional local extraction isometry $U_c$ is then used to identify the post-measurement system delivered by Carol to Bob with the output subsystem of the reference L\"uders operation. Importantly, $U_c$ acts only on the system accessible to Bob and has no access to any environment retained or discarded by Carol.

We can therefore formulate instrument self-testing directly in terms of the unknown
state $\rho$ through two relations. The first relation identifies the input state through the standard self-testing \eqref{stAB}. While the second relation, which follows from \eqref{CJst} and \eqref{stAB}, certifies the complete outcome operation by comparing its unnormalized post-measurement state with the Choi operator of the corresponding ideal L\"uders operation. Since the latter is unnormalized, this comparison simultaneously captures the probability associated with the measurement outcome and the
corresponding post-measurement update.

This motivates the following definition.

\begin{definition}[Self-testing of an ideal measurement instrument]
\label{def:instrument-self-testing}
Let $\{\mathcal{M}_c\}_c$ be the unknown measurement instrument
implemented by Carol, and let
$\{\mathcal{P}_c\}_c$, with
$\mathcal{P}_c(X)=\Pi_cX\Pi_c$, denote the reference ideal
L\"uders instrument. We say that the ideal measurement instrument
$\{\mathcal{P}_c\}_c$ is self-tested if the following conditions
hold.

\begin{itemize}[noitemsep]

\item There exist local extraction isometries $U_A$ and $U_B$ such
that
\begin{equation}
    (\mathcal{U}_A\otimes \mathcal{U}_B)(\rho)
    = \ket{\phi_d^+}\!\bra{\phi_d^+}\otimes \rho_{\mathrm{aux}}.
    \label{stAB}
\end{equation}

\item For every outcome $c$, there exists a local extraction
isometry $U_c$, acting only on the post-measurement system
$\mathcal{H}_c$ accessible to Bob, such that, with the same extraction isometry $U_A$ as in
Eq.~\eqref{stAB},
\begin{equation}
\begin{split}
    &(\mathcal{U}_A\otimes\mathcal{U}_c)
    \left[
        (\id_A\otimes\mathcal{M}_c)
        \bigl(\rho\bigr)
    \right]
    \\
    = \ &
    (\id\otimes\Pi_c)
    \ket{\phi_d^+}\!\bra{\phi_d^+}
    (\id\otimes\Pi_c)
    \otimes\rho_{\mathrm{aux},c},
    \label{stABcall}
\end{split}
\end{equation}
where
$\mathcal{U}(X)=UXU^\dagger$, and
$\rho_{\mathrm{aux}}, \rho_{\mathrm{aux},c}$ are arbitrary auxiliary states.
\end{itemize}
The isometry $U_c$ acts exclusively on the quantum system delivered
by Carol to Bob and has
no access to any internal environment that may be retained or discarded by Carol.
\end{definition}

It is noteworthy that the above definition of instrument self-testing allows Carol's device to communicate the outcome $c$ to Bob's device, so that Bob's measurement $B^{(c)}_{b|y}$ may depend explicitly on $c$. 
%This is in contrast to the framework considered in \cite{Wagner2020}, where Bob's measurement operators are required to be independent of Carol's outcome.
Now we point out one important observation.

\begin{observation}[Self-testing of a rank-one instrument element]
\label{lem:rank-one-instrument}
Let $\mathcal{M}_c$ be an outcome operation of Carol's
instrument and let the target effect be the rank-one projector $\Pi_c=\ket{i_c}\!\bra{i_c}.$ Suppose that the initial correlations self-test the maximally entangled state \eqref{stAB} and Alice's computational-basis measurement $x_0$, and suppose that the Alice--Carol correlations satisfy
\begin{equation}
    p(a,c|x_0)
    =
    \frac{1}{d}\delta_{a,i_c} .
    \label{eq:rank-one-AC}
\end{equation}
Then there exists a local extraction isometry $U_c$, such that Eq.~\eqref{stABcall} holds with $\Pi_c=\ket{i_c}\!\bra{i_c}.$ Thus, for a rank-one outcome, no additional self-test of the
post-measurement state is required.
\end{observation}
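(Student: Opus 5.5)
The plan is to first move everything into the extracted frame given by Eq.~\eqref{stAB}. From this frame, Alice's conditional states fix the bipartite operator completely. Bob's extraction is then trivial.

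Write $\rho_{AB_c}=(\id_A\otimes\mathcal{M}_c)(\rho)$ and $\tilde\rho_{c}:=(\mathcal{U}_A\otimes\id)(\rho_{AB_c})$. Since $\mathcal{M}_c$ acts only on Bob's side and commutes with $\mathcal{U}_A$, we have $\mathrm{tr}_{B_c}\tilde\rho_c=\mathrm{tr}_B[(\id\otimes\mathcal{M}_c^\dagger(\id))(\mathcal{U}_A\otimes\id)\rho]$. More useful is the positive-semidefinite bound: because $\mathcal{M}_c$ is trace non-increasing, $\mathrm{tr}_{B_c}\tilde\rho_c\le \mathrm{tr}_B[(\mathcal{U}_A\otimes\id)\rho]$. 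The right-hand side is the marginal of $\ket{\phi_d^+}\!\bra{\phi_d^+}\otimes\rho_{\mathrm{aux}}$ on Alice's extracted space, namely $\frac{\id}{d}\otimes\rho_{\mathrm{aux},A}$, where $\rho_{\mathrm{aux},A}$ is Alice's share of the auxiliary state. Self-testing of Alice's computational-basis measurement $x_0$ means $U_AA_{a|x_0}U_A^\dagger=\ket{a}\!\bra{a}\otimes\id_{\mathrm{aux}}$ on the relevant support. Hence $p(a,c|x_0)=\mathrm{tr}[(\ket{a}\!\bra{a}\otimes\id)\tilde\rho_c]$. By Eq.~\eqref{eq:rank-one-AC}, the weight on $a\neq i_c$ vanishes, and the weight on $a=i_c$ equals $1/d$.

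Next, use the operator bound. Let $\sigma_c:=\mathrm{tr}_{B_c}\tilde\rho_c\ge0$. Then $0\le\sigma_c\le\frac{\id}{d}\otimes\rho_{\mathrm{aux},A}$, and $\sigma_c$ has zero weight outside $\ket{i_c}\!\bra{i_c}\otimes\id$, so $\sigma_c$ is supported there. Its trace is $1/d$, which equals the trace of $(\ket{i_c}\!\bra{i_c}\otimes\id)\frac{\id}{d}\otimes\rho_{\mathrm{aux},A}$. A PSD operator dominated by another and having the same trace must equal it. We therefore get $\sigma_c=\frac1d\ket{i_c}\!\bra{i_c}\otimes\rho_{\mathrm{aux},A}$. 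Thus the extracted qudit of Alice is in the pure state $\ket{i_c}$, and a PSD operator whose marginal on a factor is pure factorizes. This gives $\tilde\rho_c=\frac1d\ket{i_c}\!\bra{i_c}\otimes\omega_c$, where $\omega_c$ is a normalized state on Alice's auxiliary space together with $\mathcal{H}_c$.

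Finally, construct $U_c$. The target is $(\id\otimes\ket{i_c}\!\bra{i_c})\ket{\phi_d^+}\!\bra{\phi_d^+}(\id\otimes\ket{i_c}\!\bra{i_c})=\frac1d\ket{i_ci_c}\!\bra{i_ci_c}$. Define $U_c:\mathcal{H}_c\to\mathbb{C}^d\otimes\mathcal{H}_c$ by $U_c\ket{\psi}=\ket{i_c}\otimes\ket{\psi}$, which appends a fresh register in state $\ket{i_c}$. This is an isometry acting only on Bob's system. Applying it yields $\frac1d\ket{i_ci_c}\!\bra{i_ci_c}\otimes\omega_c$, so Eq.~\eqref{stABcall} holds with $\rho_{\mathrm{aux},c}=\omega_c$, after reordering tensor factors.

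The main obstacle is making the "self-testing of Alice's measurement" precise enough to transfer to the post-Carol operator. Alice's measurement is only certified on the support of her reduced state, so I would first check that $\tilde\rho_c$ lives within that support, which the domination bound guarantees. The domination step itself relies on the trace-non-increasing property and on Carol's outcome map acting only on Bob's side.
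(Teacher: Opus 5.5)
Your proposal is correct and follows essentially the same route as the paper: you pass to Alice's extracted frame, use positivity to show that the marginal on her extracted qudit is $\frac{1}{d}\ket{i_c}\!\bra{i_c}$, conclude that the operator factorizes, and let $U_c$ append a fresh register prepared in $\ket{i_c}$. Your operator-domination step, which uses that $\mathcal{M}_c$ is trace non-increasing, is an addition the paper omits: it justifies applying the self-tested form of Alice's measurement to the post-Carol operator, although the resulting identification of Alice's auxiliary marginal as $\rho_{\mathrm{aux},A}$ is more than the conclusion requires.
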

While the detailed proof is deferred to the Appendix, the key point is that, for a rank-one outcome, the effect $\Pi_c$ completely determines the input dependence of the corresponding operation. The only remaining freedom lies in the outcome-dependent output state, which can be absorbed into the auxiliary degrees of freedom by an appropriate local extraction isometry. For $\rank(\Pi_c)\geq2$, this is no longer the case. Two instruments
may have the same effect $\Pi_c$ while acting differently within $\operatorname{supp}(\Pi_c)$. 
For example, consider the rank-two projector
$\Pi_c=\ket{0}\!\bra{0}+\ket{1}\!\bra{1}.$
The ideal L\"uders operation
$\mathcal{P}_c(\rho)=\Pi_c\rho\Pi_c $ and the completely dephasing operation
$\mathcal{M}_c(\rho)= \ket{0}\!\bra{0}\rho\ket{0}\!\bra{0} + \ket{1}\!\bra{1}\rho\ket{1}\!\bra{1}$
have the same effect,
and therefore give identical probabilities for outcome $c$. Nevertheless, their post-measurement actions are inequivalent. When acting on one subsystem of a maximally entangled state,
the former produces the maximally entangled
state whereas the latter produces the separable state. This illustrates why, for projectors of rank at least two, self-testing of the post-measurement state is essential. 

%%%%%%%%%%%%%%%%%%%%%%%%%%%%%%%%%%%%%%%%%%%%%%%%%%%%%%%%%%%%%%%%%%%

\textit{Main result.---}
A Kochen–Specker KS set is a finite set of rank-one projectors or vectors ${\cal K}$ in $\mathbbm{C}^d$ of finite dimension $d \ge 3$, which does not admit an assignment $f: \mathcal{K} \rightarrow \{0,1\}$ satisfying the following conditions. (I)~Two mutually orthogonal vectors $v_i$ and $v_j$ cannot simultaneously satisfy $f(v_i)=f(v_j)=1$. (II)~For every set of mutually orthogonal vectors $\{v_i\}$ that forms a basis, one must have $\sum_i f(v_i)=1$. A KS set is said to be \textit{complete} if every pair of orthogonal vectors belongs to at least one basis. One of the well-known examples of a complete KS set is the Peres-24 set of vectors in dimension $4$.

For any complete KS set, there exists a Bell game, called the context-projector game (defined through the orthogonality graph of the respective KS set, see Appendix), whose perfect winning condition can be realized by quantum theory, while classical (or local realist) theories cannot achieve it. Ref.~\cite{Xu_PRL_2024} shows that by merging $(d-3)$ sets of Peres-24 vectors one can construct a complete KS set in any dimension $d\geq 4$. This merging is performed by placing Peres-24 vectors (see Tab. \ref{tab:peres24} in Appendix) in sequential $4$-dimensional subspaces within dimension $d$ and removing the common vectors. 

More importantly, Ref.~\cite{Xu_PRL_2024} shows that for such KS sets in $\mathbbm{C}^d$, the respective perfect strategies for theBell game (denoted by $\mathcal{B}^\text{P}_d$) self-test the KS projectors on Bob's side, the observables corresponding to the bases formed by the KS vectors on Alice's side, and the maximally entangled state of dimension $d$. With this background in place, we now present our final main result.
\begin{result}[Self-testing ideal measurements from perfect strategies]\label{result-3}
    Ideal L\"uders measurement instruments with an arbitrary number of outcomes and projectors of arbitrary rank can be self-tested from perfect strategies of context-projector games.
\end{result}
While the detailed proof is deferred to the Appendix, its underlying idea is the following. We first self-test the maximally entangled state $\ket{\phi_d^+}$ and the measurements associated with the KS set of vectors. The ideal post-measurement state corresponding to any given outcome by Carol's measurement is a maximally entangled state supported on the respective subspace of the projector. Owing to the construction of the \textit{merged Peres-24} set, the KS vectors contained entirely within that subspace themselves form a complete KS set, provided that the subspace is at least four-dimensional. Therefore, the same measurements restricted to that subspace reproduce the perfect quantum correlations of the associated context-projector game. By applying the self-testing result to these correlations, one certifies the post-measurement state and consequently the projector for every outcome on Carol's measurement. For projectors spanning two- or three-dimensional subspaces, we introduce additional measurements on Bob's side and use Bell--Clauser--Horne--Shimony--Holt (Bell--CHSH) self-testing \cite{Yao_self,mckague2012robust,bamps2015sum,Coladangelo_NatCom} within suitable subspaces to certify the post-measurement state and, consequently, the corresponding projector.

To illustrate a simple example, the complete KS set in dimension $8$ constructed from the Peres-24 set is given in Appendix. The corresponding context-projector game can be used to self-test the two-outcome ideal measurement $\{\sum_{i=0}^3\ket{i}\!\bra{i}, \sum_{i=4}^7\ket{i}\!\bra{i}\}$.
 
It is important to note that the self-testing scheme can be readily adapted to self-test an individual projector with L\"uders rule of arbitrary dimension $d$. In this case Alice and Bob first self-test $\ket{\phi^+_{d+1}}$ and consider a binary-outcome measurement on Carol's system described by $\{ M_0,\id-M_0\}$, where the target projector $\Pi = \sum_{i=0}^{d-1}\ket{i}\!\bra{i} $ is associated with outcome $M_0$, and $\id-M_0$ is self-tested to the rank-one projector.

Finally, in the Appendix, we discuss how the robustness of instrument self-testing can be quantified in terms of the robustness of the two self-testing relations appearing in the definition~\ref{def:instrument-self-testing}.

%%%%%%%%%%%%%%%%%%%%%%%%%%%%%%%%%%%%%%%%%%%%%%%%%%%%%%%%%%%%%

\textit{Conclusions.---} Despite their necessity for the consistency of the theory, as noted by L\"uders \cite{Luders:1951APL}, the presence and fundamental role of ideal measurements was largely overlooked by the founding fathers of quantum mechanics \cite{vonNeumann:1932SPR}. In fact, recognizing their significance is a relatively recent achievement, stemming from L\"uders  correction \cite{Luders:1951APL} of an error in von Neumann's {\em Mathematical Foundations of Quantum Mechanics} \cite{vonNeumann:1932SPR} (Chapter IV), and later from the operational interpretation of L\"uders' state update rule by Chiribella and Yuan \cite{chiribella2014measurement} and by Kleinmann \cite{kleinmann2014sequences}: in quantum mechanics, Hermitian operators represent {\em the minimally disturbing measurement of} observables (whereas positive operator-valued measures represent all possible ways of measuring observables, regardless of the state transformation they produce). While recent efforts have focused on proving that these mythical objects actually exist through tomographical reconstructions \cite{Pokorny:PRL20}, we have taken a further step by proving that quantum mechanics allows us to certify ideal measurements based solely on the correlations produced by a bipartite system in which a third party sometimes introduces an ideal measurement. We believe this is a fundamental result. On a conceptual level, it shows the benefits of unifying the device‑independent and operational approaches to the foundations of quantum theory. On a practical level, it has a wide range of useful applications and may stimulate a new experiments.

%%%%%%%%%%%%%%%%%%%%%%%%%%%%%%%%%%%%%%%%%%%%%%%%%%%%%%%%%%%%%%%%%%%	

\paragraph{Acknowledgments.} Z.P.X. is supported by {National Natural Science Foundation of China} (Grant No.~12305007, No.~12675010),
Anhui Provincial Natural Science Foundation (Grant No.~2508085Y003). A.C.\ was supported by the AEI/MICIU Project ``Quantum Correlations for Quantum Information and Computation'' (QCQIC) (Project No.~PID2025-174467NB-I00). The use of generative artificial intelligence tools was limited to language editing. 

%%%%%%%%%%%%%%%%%%%%%%%%%%%%%%%%%%%%%%%%%%%%%%%%%%%%%%%%%%%%%%%%%%%	

%%%%%%%%%%%%%%%%%%%%%%%%%%%%%%%%%%%%%%%%%%%%%%%%%%%%%%%%%%%%%%%%%%%

\bibliography{bibliography}

%%%%%%%%%%%%%%%%%%%%%%%%%%%%%%%%%%%%%%%%%%%%%%%%%%%%%%%%%%%%%%%%%%%

%\newpage
%\onecolumngrid
%\appendix 
\section{Appendix} 

%%%%%%%%%%%%%%%%%%%%%%%%%%%%%%%%%%%%%%%%%%%%%%%%%%%%%%%%%%%%%%%%%%

\subsection{Proof of Observation \ref{lem:rank-one-instrument}}

%%%%%%%%%%%%%%%%%%%%%%%%%%%%%%%%%%%%%%%%%%%%%%%%%%%%%%%%%%%%%%%%%%

\begin{proof}[Proof of Observation \ref{lem:rank-one-instrument}]
Let $\rho_{A B_c}:=(\id_A\otimes\mathcal{M}_c)
\bigl(\rho\bigr)$ denote the unnormalized post-measurement state corresponding to Carol's outcome $c$. Its trace is the probability of obtaining that outcome. From Eq.~\eqref{eq:rank-one-AC}, we have $p(c) = \sum_a p(a,c|x_0) = \frac{1}{d}.$
We assume that the initial correlations self-test the maximally
entangled state and Alice's computational-basis measurement.
Hence, under the extraction isometry $U_A$, Alice's measurement
operators for $x_0$ act as
$\{\ket{a}\!\bra{a}\}_{a=0}^{d-1}$ on the extracted ideal
$d$-dimensional subsystem. Define
\begin{equation}
    \widehat{\rho}_c
    := (\mathcal{U}_A\otimes\id)
    \bigl(\rho_{A B_c}\bigr),
\end{equation}
where $\mathcal{U}_A(X)=U_AXU_A^\dagger$. Denoting Alice's
extracted ideal subsystem by $A_0$, the observed Alice--Carol
correlations \eqref{eq:rank-one-AC} imply
\begin{equation}
    \tr\!\left[  \left(\ket{a}\!\bra{a}_{A_0}\otimes\id\right)
        \widehat{\rho}_c
    \right]
    =
    \frac{1}{d}\delta_{a,i_c}.
    \label{eq:rank-one-proof-prob}
\end{equation}
Let $\widehat{\rho}_c^{A_0}
:= \tr_{\mathrm{rest}}[\widehat{\rho}_c]$
be the reduced state on Alice's extracted ideal subsystem. Since
$\widehat{\rho}_c^{A_0}\geq0$, Eq.~\eqref{eq:rank-one-proof-prob}
implies $\bra{a}\widehat{\rho}_c^{A_0}\ket{a}=0$ for all $a\neq i_c.$ Positivity then implies that all matrix elements involving such
$\ket{a}$ vanish. Therefore,
\begin{equation}
    \widehat{\rho}_c^{A_0}
    =
    \frac{1}{d}
    \ket{i_c}\!\bra{i_c}.
    \label{eq:rank-one-A-marginal}
\end{equation}

Since the reduced state in Eq.~\eqref{eq:rank-one-A-marginal} has
rank one, the full post-measurement state necessarily factorizes as
\begin{equation}
    \widehat{\rho}_c
    =
    \frac{1}{d}
    \ket{i_c}\!\bra{i_c}_{A_0}
    \otimes \sigma_c,
    \label{eq:rank-one-factorization}
\end{equation}
where $\sigma_c$ is a normalized state on the remaining degrees of
freedom, including Alice's auxiliary system and Carol's
post-measurement output accessible to Bob.

It remains to construct Bob's extraction isometry. Since the ideal
post-measurement state on Bob's extracted subsystem is the fixed
state $\ket{i_c}$, there exists a local isometry $U_c$, acting only
on the post-measurement system $\mathcal H_c$ together with a local
ancillary system, that introduces an ideal $d$-dimensional subsystem
$B_0$ in the state $\ket{i_c}$ and transfers all pre-existing output
degrees of freedom to the auxiliary subsystem. Consequently,
\begin{equation}\label{eq:rank-one-U}
(\mathcal{U}_A\otimes\mathcal{U}_c)
    \left[
        \rho_c^{A B_c}
        \otimes
        \ket{\mathrm{an}_c}\!\bra{\mathrm{an}_c}
    \right]
    =
    \frac{1}{d}
    \ket{i_c i_c}\!\bra{i_c i_c}
    \otimes\rho_{\mathrm{aux},c}
\end{equation}
for some normalized auxiliary state $\rho_{\mathrm{aux},c}$.

Finally, by noting
\begin{equation}
(\id\otimes\Pi_c)
    \ket{\phi_d^+}\!\bra{\phi_d^+}
    (\id\otimes\Pi_c)
    =
    \frac{1}{d}
    \ket{i_c i_c}\!\bra{i_c i_c}
\end{equation}
for $\Pi_c=\ket{i_c}\!\bra{i_c}$ and \eqref{eq:rank-one-U}, we obtain Eq.~\eqref{stABcall}. Hence the outcome
operation $\mathcal{M}_c$ is self-tested, in the sense of
Definition~\ref{def:instrument-self-testing}, to be the rank-one L\"uders operation associated with $\Pi_c$.
\end{proof}

%%%%%%%%%%%%%%%%%%%%%%%%%%%%%%%%%%%%%%%%%%%%%%%%%%%%%%%%%%%%%%%%%%

\subsection{Context-projector game and perfect quantum strategy}

%%%%%%%%%%%%%%%%%%%%%%%%%%%%%%%%%%%%%%%%%%%%%%%%%%%%%%%%%%%%%%%%%%

For any complete KS set, there exists a nonlocal game, known as the context-projector KS game, that admits a perfect quantum winning strategy but no perfect classical strategy \cite{aolita2012pra,Xu_PRL_2024}.
Let $G$ be the orthogonality graph of a complete KS set consisting of $n$ vectors $\{\ket{v_i}\}_{i=0}^{n-1}$ in $\mathbbm{C}^d$. Each vertex of $G$ represents a vector of the KS set, and two vertices are connected by an edge whenever the corresponding vectors are orthogonal. A clique in $G$ therefore represents a set of mutually orthogonal vectors. In particular, a maximal clique (or $d$-clique) is a clique that cannot be extended by adding another adjacent vertex; such cliques contain exactly $d$ vectors and correspond to orthonormal bases of $\mathbbm{C}^d$.
Let $T_x$ denote the $d$-cliques of $G$. Each clique contains $d$ distinct vertices,
$T_x := \{T_{x,0},\ldots, T_{x,d-1}\},$
where $x$ labels the $d$-clique and $a=0,\ldots,d-1$ labels the vertices within that clique. Suppose there are $m$ such $d$-cliques, indexed by $x=0,\ldots,m-1$.

The context-projector KS game is defined as follows. Alice receives a clique label, that is, $x\in \{0,\ldots,m-1\}$, and must output a vertex $T_{a,x}$ belonging to that clique, that is, $a\in \{0,\ldots,d-1\}$. Bob receives a vertex label $y\in \{0,\ldots,n-1\}$ and outputs a binary value $b\in \{0,1\}$. The players win the game if Bob outputs $1$ when Alice selects the vertex given to Bob, and outputs $0$ otherwise. 

Formally, the Bell expression in this game is given by
\be 
\B^{\text{P}}_d = \frac{1}{md} \sum_{a,b,x,y} c_{a,b,x,y} \ p(a,b|x,y), 
\ee 
where
\be 
c_{a,b,x,y} = 
\begin{cases}
1, \quad \text{ if } y \in T_x, \ y \neq T_{x,a} \text{ and } b=0, \\
1, \quad \text{ if } y \in T_x, \ y = T_{x,a} \text{ and } b=1, \\
0, \quad\; \text{otherwise.}
\end{cases}
\ee

By merging $(d-3)$ sets of Peres-24 vectors one can construct a complete KS set in any dimension $d\geq 4$ \cite{Xu_PRL_2024}. This merging is performed by placing Peres-24 vectors (see Tab. \ref{tab:peres24}) in sequential $4$-dimensional subspaces within dimension $d$ and removing the common vectors. 

For example, Peres-39 set of vectors in $\mathbbm{C}^5$ (given in the Supplementary Material of \cite{Xu_PRL_2024}) is obtained by merging two Peres-24 sets in the two four-dimensional subspaces in sequence as follows:
\[
\left\{\begin{pmatrix}
\bra{v_i} & 0
\end{pmatrix}^T\right\}_i \cup 
\left\{\begin{pmatrix}
0 & \bra{v_i}
\end{pmatrix}^T\right\}_i .
\]

For the constructions obtained from the \textit{merged Peres-24} sets in an arbitrary dimension $d$, it has been shown that achieving the perfect value $\mathcal{B}_d^\text{P}=1$ self-tests the underlying quantum realization \cite{Xu_PRL_2024}. In particular, the shared state is certified to be the maximally entangled state $\ket{\phi^+_d}$, Alice's measurements are self-tested to be the orthonormal bases defined by the KS set, $A_x = \{ \ket{v_{T_{x,0}}}, \dots, \ket{v_{T_{x,d-1}}}\}$, and Bob's measurements are self-tested to be $B_y = \{\id-|v_y\rangle\!\langle v_y|, |v_y\rangle\!\langle v_y|\}$.

\begin{widetext}
\begin{center}
\begin{table}[h!]
\begin{tabular}{CCCCCCCCCCCCCCCCCCCCCCCC}
\hline\hline 
v_0 & v_1 & v_2 & v_3 & v_4 & v_5 & v_6 & v_7 & v_8 & v_9 & v_{10} & v_{11} & v_{12} & v_{13} & v_{14} & v_{15} & v_{16} & v_{17} & v_{18} & v_{19} & v_{20} & v_{21} & v_{22} & v_{23} \\
\hline \rule{0pt}{3.2mm}
1 & 0 & 0 & 0 & 1 & 1 & 1 & 1 & 1 & 1 & 1 & \bar{1} & 1 & 1 & 0 & 0 & 1 & 1 & 0 & 0 & 1 & 1 & 0 & 0 \\
0 & 1 & 0 & 0 & 1 & 1 & \bar{1} & \bar{1} & 1 & 1 & \bar{1} & 1 & 1 & \bar{1} & 0 & 0 & 0 & 0 & 1 & 1 & 0 & 0 & 1 & 1 \\
0 & 0 & 1 & 0 & 1 & \bar{1} & 1 & \bar{1} & 1 & \bar{1} & 1 & 1 & 0 & 0 & 1 & 1 & 1 & \bar{1} & 0 & 0 & 0 & 0 & 1 & \bar{1} \\
0 & 0 & 0 & 1 & 1 & \bar{1} & \bar{1} & 1 & \bar{1} & 1 & 1 & 1 & 0 & 0 & 1 & \bar{1} & 0 & 0 & 1 & \bar{1} & 1 & \bar{1} & 0 & 0 \\
 \hline \hline
\end{tabular}
\caption{\label{tab:peres24} Peres-24. Each column represents the unnormalized vector $|v_i\rangle$, with $\bar{1}=-1$.
}
\end{table}
\end{center}
\end{widetext}

%%%%%%%%%%%%%%%%%%%%%%%%%%%%%%%%%%%%%%%%%%%%%%%%%%%%%%%%%%%%%%%%%%

\subsection{Proof of the Result}

%%%%%%%%%%%%%%%%%%%%%%%%%%%%%%%%%%%%%%%%%%%%%%%%%%%%%%%%%%%%%%%%%%

\begin{proof}[Proof of the Result\ref{result-3}]
Suppose we wish to self-test an ideal measurement $\{\Pi_c\}_c$ acting on a $d$-dimensional Hilbert space. By Observation~\ref{lem:rank-one-instrument}, rank-one outcomes are already covered. It therefore remains to consider the higher-rank outcomes. Suppose there are $r$ such outcomes. By appropriately relabeling them, we denote the corresponding projectors by $
\{\Pi_c\}_{c=0}^{r-1},$ where each $\Pi_c$ has rank $d_c\geq 2$. We consider the projectors in the computational basis,
\begin{equation} \label{Pic}
    \Pi_c=\sum_{i=g(c)}^{g(c+1)-1} \ket{i}\!\bra{i},
\end{equation}
where $g(c)=\sum_{k=0}^{c-1} d_k,
\ c=1,\ldots,r-1, $ and $g(0)=0$. The function $g(c)$ specifies the cumulative dimension of the projectors up to outcome $c-1$. In what follows, we focus on self-testing these higher-rank outcomes. 

Let $\mathcal{V}:=\{\ket{v_i}\}_i$ denote the $d$-dimensional \emph{merged Peres-24} set, chosen such that its first basis coincides with the computational basis, $\ket{v_i}=\ket{i}, i=0,\ldots,d-1.$ Suppose that Alice and Bob observe a perfect winning strategy in the corresponding \emph{context-projector} game, i.e.,
$\mathcal{B}^{\mathrm P}_d=1,$
from the observed correlations $\{p(a,b|x,y)\}$. By the self-testing result of Ref.~\cite{Xu_PRL_2024}, this immediately implies the self-testing of the $d$-dimensional maximally entangled state as expressed in Eq.~\eqref{stAB}. Specifically, there exist local extraction isometries $U_A$ acting on $\mathcal{H}_A$ and $U_B$ acting on $\mathcal{H}_B$ such that Eq.~\eqref{stAB} holds.

To proceed, we divide the next part of the proof into two cases:
(i) $d_c\geq 4$, and (ii) $d_c\in\{2,3\}$. \\

\textit{Case-(i): $d_c\geq 4$.} We now consider the subset of vectors in $\mathcal{V}$ whose support is entirely contained in the subspace
$\Pi_c$ given in \eqref{Pic}.
Denote this subset by
$\mathcal{V}_c$, namely,
$\ket{v_i}\in \mathcal{V}_c$ if and only if $\Pi_c\ket{v_i} = \ket{v_i}.$ By the construction of \textit{merged Peres-24} set, $\mathcal{V}_c$ is itself a merged Peres-24 set in dimension $d_c$. Consequently, one can define a context-projector game associated with the orthogonality graph generated by this vector set. The resulting orthogonality graph is a subgraph of the orthogonality graph corresponding to the original $d$-dimensional KS set. Therefore, the inputs and outputs of this reduced context-projector game form subsets of those appearing in $\mathcal{B}^{\mathrm P}_d$. Also, note that the subspace $\Pi_c$ has dimension at least four, since merged Peres-24 sets exist from four dimensions.

Next, consider the situation in which Carol performs her
measurement and obtains outcome $c$. The corresponding
normalized post-measurement state shared by Alice and Bob is
\begin{equation}
    \widehat{\rho}_{AB_c}
    := \frac{1}{p(c)}
    (\id_A\otimes\mathcal{M}_c)
    \bigl(\rho\bigr),
    \label{post-state-c}
\end{equation}
where $\mathcal{M}_c:\mathcal{L}(\mathcal{H}_B)
\rightarrow\mathcal{L}(\mathcal{H}_c)$ is the completely positive,
trace-nonincreasing operation associated with outcome $c$, and the observed outcome
probability is $p(c)=d_c/d$.

This construction enables Alice and Bob to verify perfect strategies
$\mathcal{B}^{\mathrm P}_{d_c}$
from the conditional correlations
$\{p(a,b|x,y,c)\}$ for every outcome $c$, by restricting attention to the corresponding subsets of inputs and outputs. A key feature of the argument is that Alice's measurements are precisely the same measurements that appear in the original correlations $\{p(a,b|x,y)\}$. Since Carol's measurement choice is space-like separated from Alice, it cannot influence Alice's measurement operators, which have already been self-tested via the extraction isometry $U_A$.

We emphasize that the conditional self-testing is performed separately for each outcome $c$. Accordingly, Bob's measurements used to generate the conditional correlations
$\{p(a,b|x,y,c)\}$ may depend on $c$, which may be represented by a family
$\{B^{(c)}_{b|y}\}_{b,y}$ acting on $\mathcal{H}_c$.
This is consistent with Definition~\ref{def:instrument-self-testing}, which allows an outcome-dependent extraction isometry $U_c$. In contrast, Alice's measurements are common to all conditional tests and are described by the same extraction $U_A$ fixed by the initial self-test.

Importantly, $\mathcal{B}^{P}_{d_c}$ is evaluated directly from the
conditional probabilities $\{p(a,b|x,y,c)\}$, without any further renormalization. Therefore, attaining the perfect value $\mathcal{B}^{P}_{d_c}=1$ also implies that the probabilities
of outcomes outside the reduced context-projector game vanish.
Thus, conditioned on Carol obtaining outcome $c$, the relevant correlations have support only on the outcomes appearing in the
reduced game. 
Consequently, the self-testing result of Ref.~\cite{Xu_PRL_2024}
implies that, for each outcome $c$, there exists a local extraction isometry $U_c$ acting on Carol's post-measurement system
$\mathcal{H}_c$,
such that
\begin{equation}
(\mathcal{U}_A\otimes\mathcal{U}_c)(\widehat{\rho}_{AB_c})
      = \ket{\phi^+_{d_c}}\!\bra{\phi^+_{d_c}}
   \otimes\rho_{\mathrm{aux},c},
\label{psib-st}
\end{equation}
where
$\rho_{\mathrm{aux},c}$ is an arbitrary auxiliary state. \\

\textit{Case (ii): $d_c\in\{2,3\}$.} Our goal is to show the same relation Eq.~\eqref{psib-st} when $d_c=2$ or $3$. In this case, we supplement Bob with four additional measurements. If there is no projector with $d_c= 3$ then only two additional measurements are required. We label these measurements by $y=n,n+1,n+2,n+3$ each having $d$ possible outcomes $b\in\{0,\ldots,d-1\}$. Recall that
Alice has $m$ measurements, each with $d$ possible outcomes
$a\in\{0,\ldots,d-1\}$. For this part of the proof, we employ standard self-testing results
based on the Bell--CHSH
inequality~\cite{Yao_self,mckague2012robust,bamps2015sum}. 

For convenience, we introduce the following notation for the
Bell--CHSH expression evaluated within a two-dimensional subspace.
Let $\{s_0,s_1\}$ denote two selected outcomes, and let
$\{x_0,x_1\}$ and $\{y_0,y_1\}$ denote two measurement settings
for Alice and Bob, respectively. 
Associating the outcomes $s_0$ and $s_1$ with the values
$+1$ and $-1$, respectively, we define the corresponding
correlators by
\begin{equation}
    E_c(x,y)
    :=
    \sum_{a,b\in\{s_0,s_1\}}
    \alpha_a\alpha_b\, 
    p(a,b|x,y,c),
    \label{eq:restricted-correlator}
\end{equation}
where $\alpha_{s_0}=1, \alpha_{s_1}=-1.$
The Bell--CHSH expression restricted to these outcomes and
measurement settings is then defined as
\begin{align}
&\mathcal{B}_c
\bigl(
\{s_0,s_1\}|
\{x_0,x_1\},\{y_0,y_1\}
\bigr)
:={} \nonumber \\
& E_c(x_0,y_0)+E_c(x_0,y_1)
+E_c(x_1,y_0)-E_c(x_1,y_1).
\label{eq:Bc-definition}
\end{align}

We first consider an outcome $c$ for which $d_c=2$. Suppose that
there exist two measurements on Alice's side, denoted by $x_0$ and
$x_1$, from the initial set of Alice's measurements $x_0,x_1\in\{0,\ldots,m-1\}$, such that the conditional correlations $\{p(a,b|x,y,c)\}$ satisfy
\begin{eqnarray}
 & \mathcal{B}_{c} \left(\{g(c),g(c)+1\}|\{x_0,x_1\},\{n,n+1\}\right) = 2\sqrt{2} , \nonumber \\ \label{B2} 
\end{eqnarray}
together with $p(c) = \frac2d$. Moreover, conditioned on Carol obtaining outcome $c$, the
probabilities associated with outcomes outside the corresponding subspace vanish, i.e., 
\begin{equation}
    \sum_{a,b\in \{g(c),g(c)+1\}} p(a,b|x,y,c)=1,
\end{equation}
for $x\in\{x_0,x_1\}, y \in\{n,n+1\}$. The above relation implicit $\ket{\psi_c}$ has full support on the outcomes in $\{g(c),g(c)+1\}$ for these measurement settings. Therefore, the standard Bell--CHSH self-testing result then implies Eq.~\eqref{psib-st} with $d_c=2$. 

We next consider an outcome $c$ for which $d_c=3$. Suppose that there
exist three measurements on Alice's side, denoted by $x_0,x_1,x_2$, such that the corresponding conditional correlations
satisfy
\begin{eqnarray}
 & \mathcal{B}_{c} \left(\{g(c),g(c)+1\}|\{x_0,x_1\},\{n,n+1\}\right) = \frac{2}{3} 2\sqrt{2} , \label{B3-1} \\
 &  \mathcal{B}_{c} \left(\{g(c)+1,g(c)+2\}|\{x_1,x_2\},\{n+2,n+3\}\right) = \frac{2}{3}2\sqrt{2},\nonumber \\ \label{B3-2}
 \end{eqnarray}
 and 
 \begin{equation}
      p(g(c)+2,g(c)+2|x,y,c) = \frac13,
 \end{equation}
 for $x \in \{x_0,x_1\}, y \in \{n,n+1\},$
\begin{equation}
    p(g(c),g(c)|x,y,c) = \frac13, \label{B3-3}
\end{equation} 
for $x \in \{x_0,x_2\}, y \in \{n+2,n+3\}.$ 
Moreover, let $p(c)=\frac3d$. 
In addition, Alice and Bob observe
\begin{equation}\label{B3-n}
    \sum_{a,b\in \{g(c),g(c)+1,g(c)+2\}} p(a,b|x,y,c)=1,
\end{equation}
for $x\in\{x_0,x_1,x_2\}, y \in\{n,n+1,n+2,n+3\}$, which means $\ket{\psi_c}$ has full support on the outcomes in $\{g(c),g(c)+1,g(c)+2\}$ for these measurement settings. Together with Eqs.~\eqref{B3-1}--\eqref{B3-3}, this reproduces the correlations required by the self-testing construction of Coladangelo \textit{et al.}~\cite{Coladangelo_NatCom} for the maximally entangled state of local dimension three. Consequently, Eqs.~\eqref{B3-1}--\eqref{B3-n} imply
Eq.~\eqref{psib-st} with $d_c=3$.

We now show that the measurements required above can indeed be chosen from the measurements appearing in the context-projector game on Alice's side, together with the four additional measurements on
Bob's side, which yield the correlations \eqref{B2} and \eqref{B3-1}--\eqref{B3-3}.
First, recall that the merged Peres-24 set contains the computational
basis. Therefore, for every two-dimensional or three-dimensional
projector $\Pi_c$, there exists an Alice measurement containing,
respectively, $\{\ket{g(c)},\ket{g(c)+1}\}$
or $\{\ket{g(c)},\ket{g(c)+1},\ket{g(c)+2}\}.$ We choose this computational-basis measurement as $x_0$.

Furthermore, as can be seen from Table~\ref{tab:peres24}, for every two-dimensional subspace
$\operatorname{span}\{\ket{i},\ket{i+1}\},$
the Peres-24 set contains the two vectors $(\ket{i}\pm \ket{i+1})/\sqrt{2}$. By construction, the same property holds for the merged Peres-24 set
for every relevant value of $i$. Consequently, there exist
measurements setting $x_1$ and $x_2$ such that $x_1$ contains
\[
\left\{
\frac{\ket{g(c)}+\ket{g(c)+1}}{\sqrt{2}},
\frac{\ket{g(c)}-\ket{g(c)+1}}{\sqrt{2}},
\ket{g(c)+2}
\right\},
\]
whereas $x_2$ contains
\[
\left\{
\ket{g(c)},
\frac{\ket{g(c)+1}+\ket{g(c)+2}}{\sqrt{2}},
\frac{\ket{g(c)+1}-\ket{g(c)+2}}{\sqrt{2}}
\right\}.
\] 
We now specify the additional measurements on Bob's side. Define
the Pauli-like operators acting on the corresponding
two-dimensional subspaces by
\begin{eqnarray}
& [\sigma_z]_{g_c} = \ket{g(c)}\!\bra{g(c)} - \ket{g(c)+1}\!\bra{g(c)+1}, \nonumber \\ & [\sigma_z]'_{g_c} = \ket{g(c)+1}\!\bra{g(c)+1} - \ket{g(c)+2}\!\bra{g(c)+2}, \nonumber \\
& [\sigma_x]_{g_c} = \ket{g(c)}\!\bra{g(c)+1} + \ket{g(c)+1}\!\bra{g(c)}, \nonumber \\
& [\sigma_x]'_{g_c} = \ket{g(c)+1}\!\bra{g(c)+2} + \ket{g(c)+2}\!\bra{g(c)+1}.
\end{eqnarray}
Bob's measurements $y=n,n+1$ are defined by the eigenbases of
$([\sigma_z]_{g_c}\pm[\sigma_x]_{g_c})/\sqrt{2}$ on the relevant
two-dimensional subspaces, together with the computational basis on
the orthogonal complement. Explicitly, these measurements are
given by the eigenbases of
\begin{widetext}
\[ \bigoplus_{c \in \kappa_2} \frac{[\sigma_z]_{g_c} \pm [\sigma_x]_{g_c}}{\sqrt{2}} \bigoplus_{c \in \kappa_3} \left(\frac{[\sigma_z]_{g_c} \pm [\sigma_x]_{g_c}}{\sqrt{2}} \oplus \ket{g(c)+2}\!\bra{g(c)+2}\right) \bigoplus_{c\in \overline{\kappa}} \Pi_c.
\] 
\end{widetext}
Here, $\kappa_2,\kappa_3$ are subsets of the set of outcomes $\{0,\cdots,r-1\}$ such that $\kappa_2=\{c:d_c=2\}$,
$\kappa_3=\{c:d_c=3\},$ and
$\overline{\kappa}=\{c:d_c\geq4\}$.
Similarly, Bob's measurements $y=n+2,n+3$ are defined by the
eigenbases of 
\[  \bigoplus_{c \in \kappa_3} \left( \ket{g(c)}\!\bra{g(c)} \oplus \frac{[\sigma_z]'_{g_c} \pm [\sigma_x]'_{g_c}}{\sqrt{2}}\right) \bigoplus_{c\in \overline{\kappa}\cup \kappa_2} \Pi_c,
\] 
One can directly verify that, when applied to the maximally entangled
state, these measurements generate the required Bell--CHSH
correlations: Eq.~\eqref{B2} for $d_c=2$, and
Eqs.~\eqref{B3-1}--\eqref{B3-3} for $d_c=3$. Moreover, these are
precisely the types of measurements appearing in standard
Bell--CHSH self-testing and, for the three-dimensional case, in the
self-testing construction of Coladangelo \textit{et al.}~
\cite{Coladangelo_NatCom} for $\ket{\phi^+_3}$. Hence,
Eq.~\eqref{psib-st} follows also for $d_c=2$ and $d_c=3$.

Finally, note that, since $\Pi_c$ is a rank-$d_c$ projector,
\begin{equation}
    (\id\otimes\Pi_c)
    \ket{\phi_d^{+}}\!\bra{\phi_d^{+}}
    (\id\otimes\Pi_c)
    =
    \frac{d_c}{d}\,
    \ket{\phi_{d_c}^{+}}\!\bra{\phi_{d_c}^{+}} .
\label{eq:projected-max-ent-state}
\end{equation}
Substituting $\widehat{\rho}_{AB_c}$, given by \eqref{post-state-c}, 
into Eq.~\eqref{psib-st} and using
Eq.~\eqref{eq:projected-max-ent-state}, we obtain precisely the post-measurement condition
Eq.~\eqref{stABcall} in
Definition~\ref{def:instrument-self-testing}. Since the argument holds separately for every outcome $c$, with an outcome-dependent extraction
isometry $U_c$, the claimed instrument self-testing statement follows. This completes the proof.
\end{proof}

%%%%%%%%%%%%%%%%%%%%%%%%%%%%%%%%%%%%%%%%%%%%%%%%%%%%%%%%%%%%%%%%%%

\subsection{Self-testing ideal measurement based on Peres-$84$ set of vectors in $8$-dimensional space}

%%%%%%%%%%%%%%%%%%%%%%%%%%%%%%%%%%%%%%%%%%%%%%%%%%%%%%%%%%%%%%%%%% 

Peres-84 set of vectors is obtained by merging five Peres-24 vectors in 4-dimensional subspaces. The vectors are given below.
\begin{widetext}
\begin{center}
\begin{table}[h!]
{\scriptsize
\begin{tabular}{cccccccccccccccccccccccccccccccccccccccccc}
\hline\hline
0 & 1 & 2 & 3 & 4 & 5 & 6 & 7 & 8 & 9 & 10 & 11 & 12 & 13 & 14 & 15 & 16 & 17 & 18 & 19 & 20 & 21 & 22 & 23 & 24 & 25 & 26 & 27 & 28 & 29 & 30 & 31 & 32 & 33 & 34 & 35 & 36 & 37 & 38 & 39 & 40 & 41 \\
\hline
 0 & 0 & 0 & 0 & 0 & 0 & 0 & 0 & 0 & 0 & 0 & 0 & 0 & 0 & 0 & 0 & 0 & 0 & 0 & 0 & 0 & 0 & 0 & 0 & 0 & 0 & 0 & 0 & 0 & 0 & 0 & 0 & 0 & 0 & 0 & 0 & 0 & 0 & 0 & 0 & 0 & 0 \\
 0 & 0 & 0 & 0 & 0 & 0 & 0 & 0 & 0 & 0 & 0 & 0 & 0 & 0 & 0 & 0 & 0 & 0 & 0 & 0 & 0 & 0 & 0 & 0 & 0 & 0 & 0 & 0 & 0 & 0 & 0 & 0 & 0 & 0 & 0 & 0 & 0 & 0 & 0 & 0 & 0 & 0 \\
 0 & 0 & 0 & 0 & 0 & 0 & 0 & 0 & 0 & 0 & 0 & 0 & 0 & 0 & 0 & 0 & 0 & 0 & 0 & 0 & 0 & 0 & 0 & 0 & 0 & 0 & 0 & 0 & 0 & 0 & 0 & 0 & 0 & 0 & 0 & 0 & 0 & 0 & 0 & 1 & 1 & 1 \\
 0 & 0 & 0 & 0 & 0 & 0 & 0 & 0 & 0 & 0 & 0 & 0 & 0 & 0 & 0 & 0 & 0 & 0 & 0 & 0 & 0 & 0 & 0 & 0 & 1 & 1 & 1 & 1 & 1 & 1 & 1 & 1 & 1 & 1 & 1 & 1 & 1 & 1 & 1 & 0 & 1 & 1 \\
 1 & 0 & 0 & 0 & 1 & 1 & 1 & 1 & 1 & 1 & 1 & 1 & 1 & 1 & 0 & 0 & 1 & 1 & 0 & 0 & 1 & 1 & 0 & 0 & 0 & 1 & 1 & $\bar{1}$ & $\bar{1}$ & 1 & 1 & $\bar{1}$ & $\bar{1}$ & 1 & $\bar{1}$ & 0 & 0 & 0 & 0 & 0 & 1 & $\bar{1}$ \\
 0 & 1 & 0 & 0 & 1 & 1 & $\bar{1}$ & $\bar{1}$ & 1 & 1 & $\bar{1}$ & $\bar{1}$ & 1 & $\bar{1}$ & 0 & 0 & 0 & 0 & 1 & 1 & 0 & 0 & 1 & 1 & 0 & 1 & $\bar{1}$ & 1 & $\bar{1}$ & 1 & $\bar{1}$ & 1 & $\bar{1}$ & 0 & 0 & 1 & $\bar{1}$ & 0 & 0 & 0 & 1 & $\bar{1}$ \\
 0 & 0 & 1 & 0 & 1 & $\bar{1}$ & 1 & $\bar{1}$ & 1 & $\bar{1}$ & 1 & $\bar{1}$ & 0 & 0 & 1 & 1 & 1 & $\bar{1}$ & 0 & 0 & 0 & 0 & 1 & $\bar{1}$ & 0 & 1 & $\bar{1}$ & $\bar{1}$ & 1 & $\bar{1}$ & 1 & 1 & $\bar{1}$ & 0 & 0 & 0 & 0 & 1 & $\bar{1}$ & 0 & 0 & 0 \\
 0 & 0 & 0 & 1 & 1 & $\bar{1}$ & $\bar{1}$ & 1 & $\bar{1}$ & 1 & 1 & $\bar{1}$ & 0 & 0 & 1 & $\bar{1}$ & 0 & 0 & 1 & $\bar{1}$ & 1 & $\bar{1}$ & 0 & 0 & 0 & 0 & 0 & 0 & 0 & 0 & 0 & 0 & 0 & 0 & 0 & 0 & 0 & 0 & 0 & 0 & 0 & 0 \\
 \hline\hline
\end{tabular}
\vspace{1em}
\begin{tabular}{cccccccccccccccccccccccccccccccccccccccccc}
\hline\hline
42 & 43 & 44 & 45 & 46 & 47 & 48 & 49 & 50 & 51 & 52 & 53 & 54 & 55 & 56 & 57 & 58 & 59 & 60 & 61 & 62 & 63 & 64 & 65 & 66 & 67 & 68 & 69 & 70 & 71 & 72 & 73 & 74 & 75 & 76 & 77 & 78 & 79 & 80 & 81 & 82 & 83 \\ \hline
 0 & 0 & 0 & 0 & 0 & 0 & 0 & 0 & 0 & 0 & 0 & 0 & 0 & 0 & 0 & 0 & 0 & 0 & 0 & 0 & 0 & 0 & 0 & 0 & 0 & 0 & 0 & 1 & 1 & 1 & 1 & 1 & 1 & 1 & 1 & 1 & 1 & 1 & 1 & 1 & 1 & 1 \\ 
 0 & 0 & 0 & 0 & 0 & 0 & 0 & 0 & 0 & 0 & 0 & 0 & 1 & 1 & 1 & 1 & 1 & 1 & 1 & 1 & 1 & 1 & 1 & 1 & 1 & 1 & 1 & 0 & 1 & 1 & $\bar{1}$ & $\bar{1}$ & 1 & 1 & $\bar{1}$ & $\bar{1}$ & 1 & $\bar{1}$ & 0 & 0 & 0 & 0 \\
 1 & 1 & 1 & 1 & 1 & 1 & 1 & 1 & 1 & 1 & 1 & 1 & 0 & 1 & 1 & $\bar{1}$ & $\bar{1}$ & 1 & 1 & $\bar{1}$ & $\bar{1}$ & 1 & $\bar{1}$ & 0 & 0 & 0 & 0 & 0 & 1 & $\bar{1}$ & 1 & $\bar{1}$ & 1 & $\bar{1}$ & 1 & $\bar{1}$ & 0 & 0 & 1 & $\bar{1}$ & 0 & 0 \\
 $\bar{1}$ & $\bar{1}$ & 1 & 1 & $\bar{1}$ & $\bar{1}$ & 1 & $\bar{1}$ & 0 & 0 & 0 & 0 & 0 & 1 & $\bar{1}$ & 1 & $\bar{1}$ & 1 & $\bar{1}$ & 1 & $\bar{1}$ & 0 & 0 & 1 & $\bar{1}$ & 0 & 0 & 0 & 1 & $\bar{1}$ & $\bar{1}$ & 1 & $\bar{1}$ & 1 & 1 & $\bar{1}$ & 0 & 0 & 0 & 0 & 1 & $\bar{1}$ \\
 1 & $\bar{1}$ & 1 & $\bar{1}$ & 1 & $\bar{1}$ & 0 & 0 & 1 & $\bar{1}$ & 0 & 0 & 0 & 1 & $\bar{1}$ & $\bar{1}$ & 1 & $\bar{1}$ & 1 & 1 & $\bar{1}$ & 0 & 0 & 0 & 0 & 1 & $\bar{1}$ & 0 & 0 & 0 & 0 & 0 & 0 & 0 & 0 & 0 & 0 & 0 & 0 & 0 & 0 & 0 \\
 $\bar{1}$ & 1 & $\bar{1}$ & 1 & 1 & $\bar{1}$ & 0 & 0 & 0 & 0 & 1 & $\bar{1}$ & 0 & 0 & 0 & 0 & 0 & 0 & 0 & 0 & 0 & 0 & 0 & 0 & 0 & 0 & 0 & 0 & 0 & 0 & 0 & 0 & 0 & 0 & 0 & 0 & 0 & 0 & 0 & 0 & 0 & 0 \\
 0 & 0 & 0 & 0 & 0 & 0 & 0 & 0 & 0 & 0 & 0 & 0 & 0 & 0 & 0 & 0 & 0 & 0 & 0 & 0 & 0 & 0 & 0 & 0 & 0 & 0 & 0 & 0 & 0 & 0 & 0 & 0 & 0 & 0 & 0 & 0 & 0 & 0 & 0 & 0 & 0 & 0 \\
 0 & 0 & 0 & 0 & 0 & 0 & 0 & 0 & 0 & 0 & 0 & 0 & 0 & 0 & 0 & 0 & 0 & 0 & 0 & 0 & 0 & 0 & 0 & 0 & 0 & 0 & 0 & 0 & 0 & 0 & 0 & 0 & 0 & 0 & 0 & 0 & 0 & 0 & 0 & 0 & 0 & 0\\
 \hline\hline
\end{tabular}
}
\caption{\label{tab:peres84} Peres-84. Each column represents the unnormalized vector $|v_i\rangle$, with $\bar{1}=-1$.
}
\end{table}
\end{center}
\end{widetext}
From the correlations $\{p(a,b|x,y)\}$, the maximally entangled state $\ket{\phi_8^+}$ is self-tested by observing the perfect strategy $\mathcal{B}^\text{P}_8=1$. To self-test the ideal measurement $\{\sum_{i=0}^3\ket{i}\!\bra{i}, \sum_{i=4}^7\ket{i}\!\bra{i}\}$, the parties verify perfect strategies for two context-projector games defined within appropriate subspaces.
For outcome $c=1$, they consider the context-projector game constructed from the Peres-24 vectors $\{\ket{v_i}\}_{i=0}^{23}$. Achieving the perfect strategy self-tests the projector $\sum_{i=4}^7\ket{i}\!\bra{i}$. 
Similarly, when outcome $c=0$ is obtained, they consider the context-projector game based on the KS set of 24 vectors $\left\{\ket{v_{24}},\ket{v_{39}},\ket{v_{48}},\ket{v_{49}},\ket{v_{54}}, \{\ket{v_i}\}_{i=63}^{66}, \{\ket{v_i}\}_{i=69}^{83}\right\}$. The corresponding perfect strategy then self-tests the projector $\sum_{i=0}^3\ket{i}\!\bra{i}$.

%%%%%%%%%%%%%%%%%%%%%%%%%%%%%%%%%%%%%%%%%%%%%%%%%%%%%%%%%%%%%%%%%%%

\subsection*{Robustness of the self-testing}

%%%%%%%%%%%%%%%%%%%%%%%%%%%%%%%%%%%%%%%%%%%%%%%%%%%%%%%%%%%%%%%%%%

We now discuss about the robustness of the self-testing for an instrument element. Since a quantum operation is completely characterized by
its Choi operator, it is natural to compare the Choi representation
of the extracted physical operation with that of the corresponding
ideal L\"uders operation.

For positive semidefinite operators $\rho$ and $\sigma$ with
$\tr(\rho),\tr(\sigma)\leq 1$, we use the generalized trace distance $D_{\rm g} (\rho,\sigma):= \frac{1}{2}\|\rho-\sigma\|_1 +
 \frac{1}{2} \left|\tr(\rho)-\tr(\sigma)\right|.$
For normalized states, this reduces to the usual trace distance. For a fixed outcome $c$, we define the robustness of the physical instrument element $\mathcal{M}_c$ with respect to the ideal L\"uders operation
$\mathcal{P}_c(X)=\Pi_cX\Pi_c$ as
\begin{widetext}
    \begin{equation} \mathcal{R}_c(\mathcal{M}_c,\mathcal{P}_c)
    := \inf_{U_c,\rho_{\mathrm{aux},c}}
    D_{\rm g}\Bigg( (\id_A\otimes\mathcal{U}_c)
    \circ(\id_A\otimes\mathcal{M}_c)
    \circ(\id_A\otimes\mathcal{V})
    \left[
        \ket{\phi_d^+}\!\bra{\phi_d^+}
        \otimes\rho_{\mathrm{aux}} \right],
        (\id\otimes\Pi_c) \ket{\phi_d^+}\!\bra{\phi_d^+}
    (\id\otimes\Pi_c)
    \otimes\rho_{\mathrm{aux},c}  \Bigg),
    \label{instrument-robustness}
\end{equation}
\end{widetext}
where $\mathcal{U}_c(X)=U_cXU_c^\dagger$, $\mathcal{V}(X)=VXV^\dagger.$ The first argument of Eq.~\eqref{instrument-robustness} is the Choi representation of the physical outcome operation in the self-tested reference frame, whereas the second is the Choi operator of the corresponding ideal L\"uders operation. 
Since these operators are subnormalized, this distance captures both the probability associated with outcome $c$ and the corresponding post-measurement update.

The following observation relates this quantity to the robustness
of the two self-testing relations in
Definition~\ref{def:instrument-self-testing}.

\begin{observation}
\label{lem:rob}
Suppose that there exist local extraction isometries $U_A,U_B$, and $U_c$ such that
\begin{equation}
    D\left(
        (\mathcal{U}_A\otimes\mathcal{U}_B)
        \bigl(\rho\bigr),
        \ket{\phi_d^+}\!\bra{\phi_d^+}
        \otimes\rho_{\mathrm{aux}}
    \right)
    \leq\delta_1,
    \label{del1}
\end{equation}
and
\begin{equation}
\begin{split}
    D_{\rm g}\Bigg(
        &(\mathcal{U}_A\otimes\mathcal{U}_c)
        \left[
            (\id_A\otimes\mathcal{M}_c)
            \bigl(\rho\bigr)
        \right],
        \\
        &(\id\otimes\Pi_c)
        \ket{\phi_d^+}\!\bra{\phi_d^+}
        (\id\otimes\Pi_c)
        \otimes\rho_{\mathrm{aux},c}
    \Bigg)
    \leq\delta_2.
    \label{del2}
\end{split}
\end{equation}
Then, taking $V=U_B^\dagger$, the robustness of the instrument
element satisfies
\begin{equation}
    \mathcal{R}_c(\mathcal{M}_c,\mathcal{P}_c)
    \leq
    \delta_1+\delta_2.
    \label{robustness-bound}
\end{equation}
\end{observation}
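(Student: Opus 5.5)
We bound $\mathcal{R}_c$ from above by evaluating the infimum in Eq.~\eqref{instrument-robustness} at the particular choice $U_c$ and $\rho_{\mathrm{aux},c}$ provided by Eq.~\eqref{del2}, together with $V=U_B^\dagger$. The argument is a triangle inequality for $D_{\rm g}$. The intermediate operator is $(\mathcal{U}_A\otimes\mathcal{U}_c)[(\id_A\otimes\mathcal{M}_c)(\rho)]$, the physically realized post-measurement state in the extracted frame. Its distance to the ideal Choi operator is at most $\delta_2$ by Eq.~\eqref{del2}. It then suffices to show that the first argument of Eq.~\eqref{instrument-robustness} is within $\delta_1$ of this intermediate operator.

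\emph{Step 1: triangle inequality.} We first check that $D_{\rm g}$ satisfies the triangle inequality on subnormalized positive operators. Both terms in its definition, $\frac12\|\cdot\|_1$ and $\frac12|\tr(\cdot)|$, are seminorms of differences, so this is immediate.

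\emph{Step 2: replacing $\rho$ by the ideal input.} On Alice's side, $\mathcal{U}_A$ is applied in the extracted picture. Consider the map
\[
\Phi:=(\id\otimes\mathcal{U}_c\circ\mathcal{M}_c\circ\mathcal{V}),
\]
which acts on the extracted frame $(\mathcal{U}_A\otimes\mathcal{U}_B)(\rho)$. We want
\[
\Phi\big[(\mathcal{U}_A\otimes\mathcal{U}_B)(\rho)\big]=(\mathcal{U}_A\otimes\mathcal{U}_c\mathcal{M}_c)(\rho).
\]
This requires $U_B^\dagger U_B=\id$ on $\mathcal{H}_B$, which holds because $U_B$ is an isometry. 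Here local ancillas are absorbed into the isometry, and $V=U_B^\dagger$ is understood as the adjoint map followed by restriction, i.e.\ a trace-nonincreasing CP map. Since $\Phi$ is CP and trace nonincreasing (because $\mathcal{M}_c$ is), the operation $X\mapsto\Phi(X)$ is contractive in trace norm. The map $\Phi$ is also contractive for $|\tr(\cdot)|$ on differences of positive operators, up to a careful check: $|\tr\Phi(X)-\tr\Phi(Y)|\le\|X-Y\|_1$, since $\tr\Phi(\cdot)=\tr[E\,\cdot]$ with $0\le E\le\id$. Combining these,
\[
D_{\rm g}\big(\Phi(\text{extracted }\rho),\Phi(\phi^+\otimes\rho_{\mathrm{aux}})\big)\le \tfrac12\|X-Y\|_1+\tfrac12\|X-Y\|_1\cdot\text{?}
\]

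\emph{Main obstacle.} The main obstacle is the constant in Step 2, because the trace term could double the bound to $2\delta_1$. I will handle it through the Helstrom characterization. For normalized inputs with $\tr X=\tr Y=1$ we have $\tr(X-Y)=0$. Then $\tr E(X-Y)\le \frac12\|X-Y\|_1$ for $0\le E\le\id$, since $\sup_{0\le E\le \id}\tr E(X-Y)=\|(X-Y)_+\|_1=\frac12\|X-Y\|_1$. Therefore $\frac12|\tr\Phi(X)-\tr\Phi(Y)|\le\frac14\|X-Y\|_1$.

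The cleaner route is the standard one: $D_{\rm g}$ equals the trace distance of the extended operators $X\oplus(1-\tr X)$. Under this embedding, $\Phi$ extends to a trace-preserving CPTP map $X\mapsto\Phi(X)\oplus\tr[(\id-E)X]$. Contractivity of the trace distance under CPTP maps then gives exactly $D_{\rm g}\le D(X,Y)\le\delta_1$. Here $D$ denotes the trace distance, and in Eq.~\eqref{del1} both arguments are normalized, so it agrees with $D_{\rm g}$ there.

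Finally, we combine the two estimates by the triangle inequality to obtain Eq.~\eqref{robustness-bound}. Since the infimum in Eq.~\eqref{instrument-robustness} is no larger than its value at this particular choice of $U_c$ and $\rho_{\mathrm{aux},c}$, the bound $\mathcal{R}_c\le\delta_1+\delta_2$ follows.
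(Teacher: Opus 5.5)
Your proposal is correct and follows essentially the same route as the paper. You insert the intermediate operator $(\mathcal{U}_A\otimes\mathcal{U}_c)[(\id_A\otimes\mathcal{M}_c)(\rho)]$, get the $\delta_1$ term from contractivity under the CP trace-nonincreasing map $\id\otimes(\mathcal{U}_c\circ\mathcal{M}_c\circ\mathcal{V})$ using $U_B^\dagger U_B=\id$, add $\delta_2$ by the triangle inequality, and evaluate the infimum at the given $U_c,\rho_{\mathrm{aux},c}$. Your embedding $X\mapsto X\oplus(1-\tr X)$ usefully proves the contractivity of $D_{\rm g}$ that the paper only asserts, and it is what secures the constant $1$; the preliminary Helstrom estimate alone would give only $\tfrac32\delta_1$ and can be dropped.
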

\begin{proof}[Proof of Observation~\ref{lem:rob}]
For compactness, define
\begin{equation}
    \rho_{\mathrm{id}}
    :=
    \ket{\phi_d^+}\!\bra{\phi_d^+}
    \otimes\rho_{\mathrm{aux}},
\end{equation}
and
\begin{equation}
    \rho_{\mathrm{ext}}
    :=
    (\mathcal{U}_A\otimes\mathcal{U}_B)
    \bigl(\rho\bigr).
\end{equation}
Equation~\eqref{del1} then reads
\begin{equation}
    D(\rho_{\mathrm{ext}},\rho_{\mathrm{id}})
    \leq\delta_1.
    \label{del1-short}
\end{equation}

Since $V=U_B^\dagger$, applying
$\id_A\otimes\mathcal V$ to the Bob part of the extracted
representation maps it back to the physical input representation of
Carol's instrument. We now apply the completely positive
trace-nonincreasing map corresponding to Carol's outcome $c$,
followed by the output extraction $\mathcal U_c$. By contractivity
of the generalized trace distance under completely positive
trace-nonincreasing maps, Eq.~\eqref{del1-short} implies
\begin{equation}
\begin{split}
    D_{\rm g}\Bigg(
        &(\id_A\otimes\mathcal U_c)
        (\id_A\otimes\mathcal M_c)
        (\id_A\otimes\mathcal V)
        [\rho_{\mathrm{id}}],
        \\
        &(\mathcal U_A\otimes\mathcal U_c)
        \left[
            (\id_A\otimes\mathcal M_c)
            \bigl(\rho\bigr)
        \right]
    \Bigg)
    \leq\delta_1.
    \label{rob-step1}
\end{split}
\end{equation}
Here we have used the fact that the physical operation on Bob's
system commutes with Alice's local extraction $\mathcal U_A$.

Next, define the ideal unnormalized post-measurement Choi operator
\begin{equation}
    \tau_c^{\mathrm{id}}
    :=
    (\id\otimes\Pi_c)
    \ket{\phi_d^+}\!\bra{\phi_d^+}
    (\id\otimes\Pi_c)
    \otimes\rho_{\mathrm{aux},c}.
\end{equation}
By assumption, Eq.~\eqref{del2} gives
\begin{equation}
\begin{split}
    D_{\rm g}\Bigg(
        &(\mathcal U_A\otimes\mathcal U_c)
        \left[
            (\id_A\otimes\mathcal M_c)
            \bigl(\rho\bigr)
        \right],
        \tau_c^{\mathrm{id}}
    \Bigg)
    \leq\delta_2.
    \label{rob-step2}
\end{split}
\end{equation}

Applying the triangle inequality to Eqs.~\eqref{rob-step1} and \eqref{rob-step2}, we obtain
\begin{equation}
    D_{\rm g}\Bigg(
        (\id_A\otimes\mathcal U_c)
        (\id_A\otimes\mathcal M_c)
        (\id_A\otimes\mathcal V)
        [\rho_{\mathrm{id}}],
        \tau_c^{\mathrm{id}}
    \Bigg) \leq
    \delta_1+\delta_2.
\end{equation}
Since $\mathcal{R}_c(\mathcal{M}_c,\mathcal{P}_c)$ is defined as
the infimum of this distance over the admissible output extraction
$U_c$ and auxiliary state $\rho_{\mathrm{aux},c}$, the particular
$U_c$ and $\rho_{\mathrm{aux},c}$ satisfying Eq.~\eqref{del2}
provide an admissible choice. Therefore, Eq.~\eqref{robustness-bound} holds.
\end{proof}
 
\end{document}